\begin{document}
\title{Privacy-preserving Weighted Federated Learning within Oracle-Aided MPC Framework}
\author{Huafei Zhu, Zengxiang Li, Mervyn Cheah, Rick Siow Mong Goh}
\institute{IHPC, A*STAR, Singapore}
\maketitle

\begin{abstract}
This paper studies privacy-preserving weighted federated learning within the oracle-aided multi-party computation (MPC) framework. Our contribution mainly comprises the following three-fold:
\begin{itemize}
\item In the first fold, a new notion which we call weighted federated learning (wFL) is introduced and formalized. The weighted federated learning concept formalized in this paper differs from that presented in the McMahan et al.'s paper since both addition and multiplication operations are executed over cipher space in our model while these operations are executed over plaintext space in McMahan et al.'s model;

\item In the second fold, an oracle-aided MPC solution for computing weighted federated learning is formalized by decoupling the security of the defined weighted federated learning system from that of the underlying multi-party computation. Our decoupling formulation may benefit machine learning developers to select their best security practices from the state-of-the-art secure MPC tool sets;

\item In the third fold, a concrete solution to the weighted federated learning problem is presented and analysed. The security of our implementation is guaranteed by the security composition theorem assuming that the underlying multiplication algorithm is secure against honest-but-curious adversaries.  
\end{itemize}

\begin{keywords}
Privacy-preserving, weighted federated learning, Oracle-Aided multi-party computation
\end{keywords}

\end{abstract}

\section{Introduction}
The concept of Federated Learning (FL) first introduced by McMahan et al.~\cite{Mahan1701} is a decoupling of model training from the need for direct access to the raw training data. A formal definition of Federated Learning later has been formalized by Qiang Yang et al.\cite{YQ1901}, where datasets defined in the FL framework are categorized as horizontal, vertical and hybrid types. Roughly speaking, in the horizontal FL, datasets of different organizations have same feature space but little intersection on the sample space~\cite{Mahan1702,He1901}; In the vertical FL, datasets of different organization have same sample space (entity) but little intersection on the feature space; In the hybrid FL, feature spaces and sample spaces are overlapped in an non-negligible level~\cite{Hardy1701,Hardy1801}. We refer to the reader~\cite{Mahan1901,Mahan1902,YQ1902,Agrawal2001,Agrawal2002}(and the references therein) for further reference.

\subsection{The motivation problem}
Going through the FederatedAveraging algorithm presented in~\cite{Mahan1701} that works over the horizontal datasets, we know that each client $k$ locally computes $n_k$ number of local data samples for the local model $w_{t+1} ^k$ at the current $(t+1)$-round. The parameters $n_k$ and $w_{t+1}^k$ are then sent to the global FL server who in turn, computes the weighted average of the resulting model $w_{t+1}$ $\leftarrow$ $\sum_{k=1} ^K \frac{n_k}{n} w_{t+1} ^k$ where $K$ is the number of clients and $n $ = ${n_1} + \cdots + n_K$. From the client point of views, it is desirable both $n_k$ and $w_{t+1}^k$ are well protected since the variables contain sensitive information closely related to the client $k$. In fact, a demonstrative attack sketched in~\cite{Mahan1701} shows that if the update is the total gradient of the loss on all of local data, and the features are a sparse bag-of-words, then the non-zero gradient reveals exactly which words the user has entered on the device. 

Since $n_k$ is the number of local data samples for the local model $w_{t+1} ^k$, we may simply map these parameters in the context of the FederatedAveraging to the standard notion of weight and feature pair ($n_k$, $w_{t+1} ^k$) in the context of the machine learning framework, where $n_k$ stands for weight and $w_{t+1} ^k$ stands for the feature at the $(t+1)$-round. A naive solution to protect users' data could be that, to keep $n_k$ and $w_{t+1}^k$ private, the client (data contributor) $k$ could first encrypt $n_k$ and $w_{t+1} ^k$ and then send the resulting ciphertexts $[n_k]$ and $[w_{t+1}^k]$ to the global server. The selection of the underlying encryption scheme that is used to encrypt $n_k$ and $w_{t+1} ^k$ is flexible which can be a secret sharing scheme (either Shamir secret sharing or additively secret sharing) based encryption or a homomorphic cryptosystem (e.g., additively homomorphic encryption or multiplicative encryption or (somewhat) fully homomorphic encryption). 

Roughly speaking, a FederatedAveraging algorithm working over cipher space is called weighted federated learning (wFL) since both $n_k$ and $w_{t+1} ^k$ are encrypted and thus are unknown to the global server. We stress that the notion of FederatedAveraging algorithm works over plaintexts while the notion of wFL works over ciphertexts. Given encrypted weight and feature pairs, the global server then performs the following computations over ciphers:
\begin{itemize}
\item Computing the summation over $([n_1]$, $\cdots$, $[n_K])$ such that $[n]$ = $[n_1]$ + $\cdots$ + $[n_K]$;

\item Computing $\mathrm{[WeightedAggregating]}$ = $[n_1]$ $[w_{t+1} ^1]$ + $\cdots$ +  $[n_K][w_{t+1} ^k]$;

\item Decrypting $[n]$ and $\mathrm{[WeightedAggregating]}$ to obtain the corresponding plaintexts of parameters $n$ and $\mathrm{WeightedAggregating}$;

\item Updating the global parameter $\mathrm{FederatedAveraging}$ by computing $\mathrm{WeightedAggregating}/n$ (possibly, multi iterations will be conducted depending a pre-defined threshold for this training model).
\end{itemize}

Since the suggested solution to wFL comprises two basic addition and multiplication arithmetic operations defined over ciphers, the state-of-the-art secure multi-party computation (MPC) platforms and secure machine learning (ML) platforms can be applied to solve the $\mathrm{wFL}$ problem. However, a direct application of additive or multiplicative or (somewhat) fully-homomorphic encryption to the above problem could result in an inefficient solution since in the federated learning scenario, the number of total sample data of an application is big (an experiment for 100 clients each with 600 data samples has been demonstrated by McMahan et al.~\cite{Mahan1701}). 

Recall that the challenging of MPC based on SPDZ framework~\cite{Ivan2012,Ivan2013,Ivan2018,Smart1901,Smart2001} is to generate Beaver multiplication triple set efficiently and securely. Three methods are known so far to generate Beaver triple: 1) somewhat fully-homomorphic based solution; 2) Trusted Third Party based software solution and 3)Enclave based hardware solution. The challenging of zero-summation based MPC such as ShareMind~\cite{Dan2012,Lindell201601} is the scalability problem since it is inherently suitable for 3-party computation. To resolve the scalability of the zero-summation based MPC, we can apply the committee selecting technique presented in~\cite{Micali18,Micali19} and thus ShareMind may be more suitable for weighted federated learning solutions.  

\subsection{This work}
Generalizing the above observation, we are able to introduce the notion of weighted federated learning defined over cipher space, which is stated informally below:

\subsubsection{Weighted Federated Learning (wFL)}: Let $P_1$, $\cdots$, $P_m$ be $m$ clients. Each client $P_i$ has its private input $inp_i$ =$(w_i, f_i)$ and outsources its encrypted data $([w_i], [f_i])$ to a set of FL computing servers, where $([w_i], [f_i])$ stands for a pair of encrypted weight $[w_i]$ and feature $[f_i]$. W.l.o.g, we simply assume that both $w_i$ and $f_i$ are integers. Let $\mathrm{WA}$$(inp_1,\cdots, inp_m)$ = $1/([w_1]+ \cdots + [w_m])$ $\times$ $([w_1] [f_1] + \cdots + [w_m] [f_m])$ be a machine learning mechanism maintained and managed by a global server for conducting the weighted aggregating algorithm $\mathrm{WA}$ whose input is $([w_i], [f_i])$ ($i=1, \cdots, m$) and output is $1/([w_1]+\cdots+ [w_m]) \times ([w_1] [f_1] +\cdots+ [w_m] [f_m])$.

\subsubsection{The challenging and solution}: As noted above, there are known solutions to the basic addition and multiplication arithmetic operations defined over ciphers. The evolution of the existing algorithms and protocols for implementing arithmetic operations defined over ciphers leaves us a challenging task $-$ how to evaluate the security of a federated learning system constructed from the evolving implementations. For example, in ShareMind, the multiplication operator based on the Du and Atallah's method~\cite{Du2001} was replaced by a newly developed zero-summation triple mechanism~\cite{Dan2012,Lindell201601}. To solve this challenging problem, we decouple the security of the weighted federated learning from that of the underlying arithmetic operations by viewing a known implementation of arithmetic operation, or a protocol defined over ciphers as an oracle-aided computation; We then evaluate the security of the weighted federated learning system in the MPC protocol composition model. Our decoupling formulation may benefit machine learning developers to select their best security practices from the state-of-the-art security tool sets.
 
\subsubsection{The roadmap}: The rest of this paper is organized as follows: In section~2, syntax and security definition for weighted federated learning is introduced and formlaized; An efficient implementation and security proof are presented in Section~3. We conclude our work in Section~4.

\section{Syntax and security definition}
In this section, we are going to provide a formal definition for weighted Federated Learning and then define the security of wFL within the oracle-aided multi-party computation framework.

\subsection{Syntax of weighted federated learning}

\begin{definition}
A weighted Federated Learning protocol (wFL) consists of a group of clients ($c_1, \cdots, c_m$), a global Federated Learning server sFL and a group of MPC servers ($P_1 \cdots, P_n$). Each client $c_i$ holds a weight and feature pair $(x_i, y_i) \in Z_p ^* \times Z_p ^*$ ($p$ is a prime number) which is additively shared among MPC servers where $P_j$ holds $(x_{i,j}, y_{i,j})$, $x_i$ = $x_{i,1}+ \cdots + x_{i,n}$ and $y_i$ =$y_{i,1}+ \cdots + y_{i,n}$. By ($[x_i]$,$[y_i]$), we denote a pair of random shares $(x_{i,1}, \cdots, x_{i,n})$ and $(y_{i,1}, \cdots, y_{i,n})$ among $P_j$ ($j=1, \cdots, n$). The global federated learning server sFL defines a machine learning algorithm WA whose input is $([x_1], [y_1])$, $\cdots$, $([x_n], [y_n])$ and the output is the plaintext of the aggregation ($\sum _{k=1} ^K [x_i] \times [y_i]$, $\sum _{k=1} ^K [x_i]$).
\end{definition}

\begin{remark}
Please note that the definition of wFL presented in this paper, is different from that presented in the McMahan et al.'s paper since both addition and multiplication operations are executed over cipher space in our model while these operations are executed over plaintext space in McMahan et al.'s model~\cite{Mahan1701}.
\end{remark}

\begin{remark}
Please also note that the definition of wFL presented in this paper, is different from that presented in the Bonawitz et al.'s paper since both addition and multiplication operations are executed over cipher space in our model while ONLY addition operation is executed over cipher space in Bonawitz et al.'s model~\cite{Mahan1702} where each data contributor's weight is a public value.

\end{remark}

\subsection{Security definition of weighted federated learning}
The security of wFL protocol is formalized in the context of an oracle-aided secure multi-party computation (MPC) which in essence, is a decoupling of machine learning algorithm from the need for MPC that may benefit machine learning developers to select their best security practices from the state-of-the-art security tool sets. We briefly describe the notations and notions related to oracle-aided secure multi-party computation below and refer to the reader~\cite{Goldreichbook1,Goldreichbook2}) for more details.

Let $f:$ $(\{0, 1\}^{*})^m $ $\rightarrow$ $(\{0, 1\}^{*})^m $ be an $m$-ary functionality, where $f_i(x_1, \cdots, x_m)$ denotes the $i$th element of $f(x_1, \cdots, x_m)$. Let $[m]$= $\{1, \cdots, m\}$, and for $I \in \{i_1, \cdots, i_t\}$ $\subseteq$ $[m]$, we let $f_I(x_1, \cdots, x_m)$ denote the subsequence $f_{i_1}(x_1, \cdots, x_m)$, $\cdots$, $f_{i_t}(x_1, \cdots, x_m)$. Let $\mathrm{\Pi}$ be an $m$-party protocol for computing $f$. The view of the $i$-th party during an execution of $\mathrm{\Pi}$ on $\overline{x}$:= $(x_1, \cdots, x_m)$ is denoted by $\mathrm{View_i ^{\Pi}} (\overline{x})$. For $I$ = $\{i_1, \cdots, i_t\}$, we let $\mathrm{View_I ^{\Pi}} (\overline{x})$:= ($I$, $\mathrm{View_{i_1} ^{\Pi}} (\overline{x})$, $\cdots$, $\mathrm{View_{i_t} ^{\Pi}} (\overline{x})$). In case $f$ is a deterministic $m$-ary functionality, we say $\mathrm{\Pi}$ privately computes $f$ if there exists a probabilistic polynomial-time algorithm denoted $S$, such that for every $I \subseteq [m]$, it holds that $S(I$, $(x_{i_1}, \cdots, x_{i_t})$, $f_I(\overline{x}))$ is computationally indistinguishable with $\mathrm{View_I ^{\Pi}} (\overline{x})$. In general case, $S(I$, $(x_{i_1}, \cdots, x_{i_t})$, $f_I(\overline{x})$, $f(\overline{x}))$ is computationally indistinguishable with $\mathrm{View_I ^{\Pi}}$ $((\overline{x})$, $f(\overline{x}))$.

An oracle-aided protocol is a protocol augmented by a pair of oracle types, per each party. An oracle-call step is defined as follows: a party writes an oracle request on its own oracle tape and then sends it to the other parties; in response, each of the other parties writes its query on its own oracle tape and responds to the first party with an oracle call message; at this point the oracle is invoked and the oracle answer is written by the oracle on the ready-only oracle tape of each party. An oracle-aided protocol is said to privately reduce $g$ to $f$ if it securely computes $g$ when using the oracle-functionality $f$. In such a case, we say that $g$ is securely reducible to $f$.

\begin{definition}
An multiplication-oracle aided $wFL$ is privacy-preserving if $wFL$ is privately reducible to the multiplication functionality.
\end{definition}

\begin{remark}
Please notice that we do not provide the privacy-preserving reduction to the addition oracle since the underlying data sharing scheme is an additively secret sharing.
\end{remark}

\section{The implementation and security proof}
In this section, a concrete solution of wFL based on the additive data sharing with the help of the zero-summation technique defined over three-server setting is presented and analyzed. The security of our implementation is derived from the security composition theorem assuming that the underlying ShareMind Multiplication algorithm is secure against honest-but-curious adversaries.

\subsection{The implementation}
Our implementation consists of following steps: the data splitting, the resharing, the addition and the multiplication. Each of steps is depicted in details below: 

\subsubsection{The data splitting}

Suppose a wFL client Alice holds private data $[x]$ and $[y]$ locally. W.l.o.g., we assume that there are three MPC servers managed and maintained by independent computing service providers such as FL auditor ($P_1$), FL insurance company $P_2$) and FL client association ($P_3$). We assume that there is a secure (private and authenticated) channel between client Alice and each of MPC service providers. This assumption is standard and can be easily implemented under the standard PKI assumption. For simplicity, we assume that $x, y \in Z_p ^{*}$, where $p$ is a suitable large prime number (e.g., $|p|$ =512). The splitting procedure is defined below

\begin{itemize}
\item Alice selects $x_1, x_2 \in Z^*_p$ uniformly at random, and then sends $x_1$ to $P_1$, $x_2$ to $P_2$;

\item Alice computes $x_3 = x - x_2 -x_3$~mod~p and sends $x_3$ to $P_3$.
\end{itemize}
The splitting of the data $x$ is defined by $[x]$ = $(x_1, x_2, x_3)$ (as usual, a random split of data is also called an encryption of that data). Similarly, an encryption of $y$ is defined by $[y]$ = $(y_1, y_2, y_3)$, where $P_i$ holds $y_i$ ($i$= 1,2,3).

\subsubsection{The resharing}

A refreshing procedure is called whenever a multiplication operation is executed. The refreshing procedure is defined among $P_1$ (with input $x_1$), $P_2$ (with input $x_2$ ) and $P_3$ (with input $x_3$) such that $x$ = $x_1$ + $x_2$ + $x_3$:

\begin{itemize}
\item $P_1$ selects $r_1 \in Z_p ^*$ uniformly at random and sends $r_1$ to $P_2$ via a pre-defined secure channel;

\item Similarly, $P_2$ (resp. $P_3$) selects $r_2 \in Z_p ^*$ (resp. $r_3 \in_U Z_p ^*$) uniformly at random and sends $r_2$ (resp. $r_3$) to $P_3$ (resp. $P_1$) via a pre-defined secure channel;

\item $P_1$ locally computes $\sigma_1$ = $r_1$ - $r_3$~mod~$p$  and  $x_1 '$ = $x_1 + \sigma_1$~mod~$p$; $P_2$ locally computes $\sigma_2$ = $r_2$ - $r_1$~mod~$p$  and  $x_2 '$ = $x_2 + \sigma_2$~mod~$p$; $P_3$ locally computes $\sigma_3$ = $r_3$ - $r_2$~mod~$p$  and  $x_3 '$ = $x_3 + \sigma_3$~mod~$p$.

\end{itemize}
A refresh of $[x]$ is denoted by $[x']$ = $([x_1 '],[x_2 '],[x_3 '])$. One can verify that $x_1 ' + x_2 ' + x_3 '$~mod~$p$ = $x_1 + x_2 + x_3$~mod~$p$. 

\subsubsection{The addition}
Suppose $P_i$ holds shares of $x_i$ and $y_i$. $P_i$ locally computes $z_i$ = $x_i$ + $y_i$~mod~$p$ and then sends $z_i$ to the FL global server who computes $z_1 + z_2 + z_3$~mod~$p$ and thus gets the value of addition $x + y$~mod~$p$.

\subsubsection{The multiplication}
On input $(x_i, y_i)$, each of participants $P_i$ can jointly run the resharing protocol to get $(x_i', y_i')$ ($i =1, 2, 3$). The role of resharing protocol plays a one-time padding of shares. $P_i$ then sends its shares $(x_i', y_i')$ to $P_{i~mod~3 + 1}$. Then $P_1$ computes $z_1$ = $(x_1' y_1 ' + x_1 ' y_3 ' + x_3 ' y_1')$~mod~$p$; $P_2$ computes $z_2$ = $(x_2 ' y_2' + x_2 ' y_1' + x_1' y_2')$~mod~$p$ and $P_3$ computes $z_3$ = $x_3' y_3' + x_3' y_2' + x_2' y_3'$~mod~$p$. One can verify that $z_1 + z_2 + z_3$~mod~$p$ = $[x]$$[y]$~mod~$p$. 

\subsubsection{Putting things together}
Given an encryption of the weight and feature vectors $[n]$ =$([n_1], \cdots, [n_K])$ and $[w]$ =$([w_1], \cdots, [w_K])$, where $[n_k]$ = ($n_{k,1}$, $n_{k,2}$, $n_{k,3}$) and $[w_k]$ = ($w_{k,1}$, $w_{k,2}$, $w_{k,3}$). Notice that ($n_{k,1}$, $w_{k,1}$) is a secret share held by $P_1$, ($n_{k,2}$, $w_{k,2}$) is a share held by $P_2$ and $P_3$ holds ($n_{k,3}$, $w_{k,3}$) for $k= 1, \cdots, K$. Applying the addition and multiplication operations described above, we are able to solve the wFL problem.

\subsection{The proof of security}

\begin{theorem}
Let $\mathrm{g_{wFL}}$ be a weighted Federated Learning functionality defined in the three-server framework. Let $\mathrm{\Pi^{g_{wFL}|f_{mult}}}$ be an oracle-aided protocol that privately reduces $\mathrm{g_{wFL}}$ to $\mathrm{f_{mult}}$ and $\mathrm{\Pi^{f_{mult}}}$ be a protocol privately computes $\mathrm{f_{mult}}$. Suppose $\mathrm{g_{wFL}}$ is privately reducible to $\mathrm{f_{mult}}$ and that there exists a protocol for privately computing $\mathrm{f_{mult}}$, then there exists a protocol for privately computing $\mathrm{g_{wFL}}$.
\end{theorem}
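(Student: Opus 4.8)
The plan is to read this statement as the semi-honest composition theorem for oracle-aided protocols (in the style of Goldreich, \emph{Foundations of Cryptography}, Vol.~2), instantiated with the functionalities $g_{wFL}$ and $f_{mult}$, and to prove it by the ``inline the subprotocol'' construction followed by a hybrid argument. First I would define the composed protocol $\Pi^{g_{wFL}}$: run $\Pi^{g_{wFL}|f_{mult}}$ verbatim, except that whenever a party would write an $f_{mult}$-request on its oracle tape, the parties instead invoke a fresh, independent copy of $\Pi^{f_{mult}}$ on the corresponding shares (the reshared, one-time-padded $(x_i',y_i')$ in our implementation), and feed that copy's local outputs back into $\Pi^{g_{wFL}|f_{mult}}$ in place of the oracle answer. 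Correctness is immediate: by hypothesis each copy of $\Pi^{f_{mult}}$ returns (exactly, in the deterministic case) the value the oracle would have returned, so the outer protocol sees the same distribution of answers as in an honest oracle execution and therefore outputs the weighted aggregation $\bigl(\sum_k [x_k][y_k],\ \sum_k [x_k]\bigr)$, and hence its plaintext, as required.

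The substance is privacy, which I would establish by building a PPT simulator $S$ for $\Pi^{g_{wFL}}$ out of the two given simulators. Call $S_1$ the simulator for the oracle-aided protocol $\Pi^{g_{wFL}|f_{mult}}$ --- which, on a coalition $I$, the corrupted parties' inputs, and the outputs of $g_{wFL}$, produces a simulated oracle-aided view \emph{including} the sequence of oracle queries and answers seen by the parties in $I$ --- and $S_2$ the simulator for $\Pi^{f_{mult}}$. On input $(I,\overline{x}_I,g_{wFL}(\overline{x}))$, $S$ runs $S_1$ to obtain a simulated oracle-aided view; then, for each of the (polynomially many) oracle-call steps appearing in it, $S$ reads off the corrupted parties' local inputs to and outputs from that call, invokes $S_2$ on those values to get a simulated transcript of the corresponding $\Pi^{f_{mult}}$ copy, and splices it into the view in place of the request/answer messages. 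The output of $S$ is, by construction, syntactically a legal transcript of $\Pi^{g_{wFL}}$.

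To prove $S(I,\overline{x}_I,g_{wFL}(\overline{x})) \stackrel{c}{\equiv} \mathrm{View}_I^{\Pi^{g_{wFL}}}(\overline{x})$ I would run the argument in two stages. Stage one is a hybrid over the oracle calls: let $t$ bound their number, and define $H_0,\dots,H_t$ where $H_j$ is the distribution obtained by running the \emph{real} outer protocol on the real inputs but using genuine $\Pi^{f_{mult}}$ transcripts for the first $j$ calls and $S_2$-generated transcripts (consistent with the actual inputs/outputs occurring in that execution) for the remaining $t-j$. Then $H_0$ is $\mathrm{View}_I^{\Pi^{g_{wFL}}}(\overline{x})$, consecutive hybrids differ only in whether a single subprotocol transcript is real or $S_2$-simulated on the same (input, output) pair --- so $H_{j-1}\stackrel{c}{\equiv}H_j$ by the privacy of $\Pi^{f_{mult}}$ --- and since $t$ is polynomial the accumulated distance stays negligible, giving $H_0 \stackrel{c}{\equiv} H_t$. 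Stage two replaces the now $S_2$-only outer execution in $H_t$ by the output of $S_1$; this is indistinguishable by the privacy of $\Pi^{g_{wFL}|f_{mult}}$, and since the spliced-in $S_2$-transcripts are computed as a fixed (PPT) function of the outer transcript, the transformation preserves indistinguishability and yields exactly $S$'s output. Chaining the two stages proves the theorem; the variant with auxiliary outputs $f(\overline{x})$ is handled verbatim.

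The hard part will be the first-stage hybrid: the inputs handed to each $\Pi^{f_{mult}}$ copy are \emph{not} fixed in advance but are random variables determined by the outer protocol's coins and by the outputs of earlier subprotocol calls, so the ``swap one transcript'' step needs $S_2$ to be a good simulator uniformly over all inputs (not merely per fixed input), and the reduction must embed the single challenged $\Pi^{f_{mult}}$ instance inside an otherwise faithfully emulated global execution of $\Pi^{g_{wFL}}$. A second, more routine point to nail down is the oracle-call bookkeeping --- requests on oracle tapes, and the fact that the resharing step makes each multiplication's inputs marginally uniform --- so that what $S_1$ emits really does line up with legal transcripts of $\Pi^{g_{wFL}}$ after splicing.
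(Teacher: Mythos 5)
Your proposal is correct and takes essentially the same route as the paper: the paper likewise constructs $\Pi$ by inlining $\Pi^{f_{mult}}$ at each oracle call, builds the simulator by composing $\mathrm{S_i^{g_{wFL}|f_{mult}}}$ with $\mathrm{S_i^{f_{mult}}}$, and argues indistinguishability via the same intermediate hybrid (real oracle-aided execution augmented with simulated $f_{mult}$ transcripts), bridged on one side by the privacy of $\Pi^{f_{mult}}$ and on the other by the privacy of the oracle-aided protocol. Your per-call hybrid over the $t$ invocations and your remarks on the adaptively determined subprotocol inputs are simply a more careful spelling-out of the single hybrid step the paper states coarsely.
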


\begin{proof}
We construct a protocol $\mathrm{\Pi}$ for computing $\mathrm{g_{wFL}}$. That is, we replace each invocation of the oracle $\mathrm{f_{mult}}$ by an execution of protocol $\mathrm{\Pi^{f_{mult}}}$. Note that in the semi-honest model, the steps executed $\mathrm{\Pi^{g_{wFL}|f_{mult}}}$ inside $\mathrm{\Pi}$ are independent the actual execution of $\mathrm{\Pi^{f_{mult}}}$ and depend only on the output of $\mathrm{\Pi^{f_{mult}}}$. 

For each $i=1,2,3$, let $\mathrm{S_i ^{g_{wFL}|f_{mult}}}$ and $\mathrm{S_i ^{f_{mult}}}$ be the corresponding simulators for the view of party $P_i$. We construct a simulator $S_i$ for the view of party $P_i$ in $\mathrm{\Pi}$. That is, we first run $\mathrm{S_i ^{g_{wFL}|f_{mult}}}$ and obtain the simulated view of party $P_i$ in $\mathrm{{\Pi}^{g_{wFL}|f_{mult}}}$. This simulated view includes queries made by $P_i$ and the corresponding answers from the oracle. Invoking $\mathrm{S_i ^{f_{mult}}}$ on each of partial query-answer $(q_i, a_i)$, we fill in the view of party $P_i$ for each of these interaction of $\mathrm{S_i ^{f_{mult}}}$. The rest of the proof is to show that $S_i$ indeed generates a distribution that is indistinguishable from the view of $P_i$ in an actual execution of $\mathrm{\Pi}$. 

Let $\mathrm{H_i}$ be a hybrid distribution represents the view of $P_i$ in an execution of $\mathrm{{\Pi}^{g_{wFL}|f_{mult}}}$ that is augmented by the corresponding invocation of $\mathrm{S_i ^{f_{mult}}}$. That is, for each query-answer pair $(q_i, a_i)$, we augment its view with $\mathrm{S_i ^{f_{mult}}}$. It follows that $\mathrm{H_i}$ represents the execution of protocol $\mathrm{\Pi}$ with the exception that $\mathrm{\Pi ^{f_{mult}}}$ is replaced by simulated transcripts. We will show that 

\begin{itemize}
\item the distribution between $\mathrm{H_i}$ and $\mathrm{\Pi}$ are computationally indistinguishable: notice that the distributions of $\mathrm{H_i}$ and $\mathrm{\Pi}$ differ $\mathrm{\Pi ^{f_{mult}}}$ and $\mathrm{S_i ^{f_{mult}}}$ which is computationally indistinguishable assuming that $\mathrm{\Pi ^{f_{mult}}}$ securely computes $\mathrm{f_{mult}}$. 
 
\item the distribution between $\mathrm{H_i}$ and $\mathrm{S_i}$ are computationally indistinguishable: notice that the distributions between ($\mathrm{{\Pi}^{g_{wFL}|f_{mult}}}$, $\mathrm{S_i ^{f_{mult}}}$) is computationally indistinguishable from ($\mathrm{{S_i}^{g_{wFL}|f_{mult}}}$, $\mathrm{S_i ^{f_{mult}}}$). The distribution ($\mathrm{{S_i}^{g_{wFL}|f_{mult}}}$, $\mathrm{S_i ^{f_{mult}}}$) defines $\mathrm{S_i}$. That means $\mathrm{H_i}$ and $\mathrm{S_i}$ are computationally indistinguishable.
\end{itemize}
\end{proof}

\begin{corollary}
Assuming that the underlying multiplication algorithm presented in~\cite{Dan2012} is secure against honest-but-curious adversary, our implementation is secure against the same adversarial type. 
\end{corollary}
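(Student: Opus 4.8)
The plan is to obtain the corollary as a direct instantiation of the composition theorem proved above. First I would fix the two protocols it refers to: let $\mathrm{\Pi^{g_{wFL}|f_{mult}}}$ be exactly the protocol of Section~3.1 in which every invocation of the multiplication sub-protocol is replaced by a call to an ideal functionality $\mathrm{f_{mult}}$ that takes the additive shares of $x$ and $y$ held by $P_1,P_2,P_3$ and returns fresh additive shares of $xy \bmod p$; and let $\mathrm{\Pi^{f_{mult}}}$ be the ShareMind multiplication algorithm of~\cite{Dan2012}. With these identifications, the corollary follows once the two hypotheses of the theorem are checked: (i) $\mathrm{g_{wFL}}$ is privately reducible to $\mathrm{f_{mult}}$, and (ii) $\mathrm{\Pi^{f_{mult}}}$ privately computes $\mathrm{f_{mult}}$ against honest-but-curious adversaries.

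Hypothesis~(ii) is precisely what the statement of the corollary grants, so nothing is needed there beyond recording that it also fixes the corruption threshold --- a single passive corruption among the three MPC servers --- which the conclusion will inherit. The substance of the argument is hypothesis~(i): I would exhibit the simulators $\mathrm{S_i^{g_{wFL}|f_{mult}}}$ for $i=1,2,3$ (and a simulator for the global server $\mathrm{sFL}$). The view of a corrupted MPC server $P_i$ in $\mathrm{\Pi^{g_{wFL}|f_{mult}}}$ consists of the shares $(x_{k,i},y_{k,i})$ received from the clients over the secure channels, the masking values received during each resharing step, the share it receives from a neighbour before each oracle call together with the share of the product returned by $\mathrm{f_{mult}}$, and finally the aggregated plaintexts $\sum_k x_k y_k \bmod p$ and $\sum_k x_k \bmod p$, which belong to $f_I(\overline{x})$. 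In a real execution each of the first three items is uniformly distributed and independent of the honest parties' secrets --- the client shares are uniform by construction, the $r_j$'s are uniform, and the one-time-pad structure of the $\sigma_i$'s makes every refreshed share uniform --- so the simulator samples them uniformly; the trailing plaintexts it copies from its input $f_I(\overline{x})$; and the $z_i$ messages sent to $\mathrm{sFL}$ during the additions and the final reconstruction are sampled uniformly subject to summing to those plaintexts, exactly as in the real run. A symmetric and simpler argument covers a corrupted $\mathrm{sFL}$, whose view is only the $z_i$'s and is therefore reconstructed from the output alone. This gives (i), and the theorem then yields a protocol privately computing $\mathrm{g_{wFL}}$ in the honest-but-curious model, which by the identification above is our implementation.

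The step I expect to be the main obstacle is making hypothesis~(i) fully rigorous. One must pin down the exact input/output share formats of $\mathrm{f_{mult}}$ so that the oracle-aided protocol is genuinely independent of the internal execution of $\mathrm{\Pi^{f_{mult}}}$ and depends only on its outputs, as the theorem's proof requires; and one must verify that the resharing randomness is fresh and independent across the many multiplications carried out in the aggregation, so that the hybrid argument of the theorem applies verbatim rather than needing extra care for dependent transcripts. Once that bookkeeping is settled, the indistinguishability claims are routine, since every simulated message is either uniform or determined by the declared output.
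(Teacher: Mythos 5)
Your proposal is correct and follows essentially the same route as the paper: the corollary is obtained by instantiating the composition theorem with the Section~3.1 protocol (oracle calls to $\mathrm{f_{mult}}$) and the ShareMind multiplication of~\cite{Dan2012} as $\mathrm{\Pi^{f_{mult}}}$. In fact you go somewhat beyond the paper, which leaves hypothesis~(i) --- the private reducibility of $\mathrm{g_{wFL}}$ to $\mathrm{f_{mult}}$ --- implicit, whereas you sketch the simulators explicitly (uniform client shares, one-time-pad resharing masks, $z_i$'s sampled consistently with the output), and your own caveat about fixing the exact input/output interface of $\mathrm{f_{mult}}$ is precisely the bookkeeping the paper glosses over.
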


\section{Conclusion}

In this paper, a new notion which we call weighted federated learning problem is introduced and formalized. The security of wFL is defined within the Oracle-aided MPC framework. An efficient solution to the wFL is implemented within the framework of ShareMind and we are able to show that if the underlying multiplication algorithm is secure against honest-but-curious adversary, then our implementation is secure against the same adversarial type.


\begin{thebibliography}{2020}
\bibitem{Mahan1701} Brendan McMahan, Eider Moore, Daniel Ramage, Seth Hampson, Blaise Agueray Arcas: Communication-Efficient Learning of Deep Networks from Decentralized Data. AISTATS 2017: 1273-1282
\bibitem{YQ1901} Qiang Yang, Yang Liu, Tianjian Chen, Yongxin Tong: Federated Machine Learning:  Concept and Applications. ACM TIST 10(2): 12:1-12:19 (2019).
\bibitem{Mahan1702} Keith Bonawitz, Vladimir Ivanov, Ben Kreuter, Antonio Marcedone, H. Brendan McMahan, Sarvar Patel, Daniel Ramage, Aaron Segal, Karn Seth: Practical Secure Aggregation for Privacy-Preserving Machine Learning. ACM Conference on Computer and Communications Security 2017: 1175-1191.
\bibitem{He1901} Qinbin Li, Zeyi Wen, Bingsheng He: Federated Learning Systems: Vision, Hype and Reality for Data Privacy and Protection. CoRR abs/1907.09693 (2019).
\bibitem{Hardy1701}Stephen Hardy, Wilko Henecka, Hamish Ivey-Law, Richard Nock, Giorgio Patrini, Guillaume Smith, Brian Thorne: Private federated learning on vertically partitioned data via entity resolution and additively homomorphic encryption. CoRR abs/1711.10677 (2017)
\bibitem{Hardy1801} Richard Nock, Stephen Hardy, Wilko Henecka, Hamish Ivey-Law, Giorgio Patrini, Guillaume Smith, Brian Thorne: Entity Resolution and Federated Learning get a Federated Resolution. CoRR abs/1803.04035 (2018)
\bibitem{Mahan1901} Keith Bonawitz, Hubert Eichner, Wolfgang Grieskamp et al.:
Towards Federated Learning at Scale: System Design. CoRR abs/1902.01046 (2019).
\bibitem{Mahan1902} Alexander Ratner, Dan Alistarh, Gustavo Alonso et al.: SysML: The New Frontier of Machine Learning Systems. CoRR abs/1904.03257 (2019)
\bibitem{YQ1902} Jesse Read, Albert Bifet, Wei Fan, Qiang Yang, Philip S. Yu:
Introduction to the special issue on Big Data, IoT Streams and Heterogeneous Source Mining. Int. J. Data Sci. Anal. 8(3): 221-222 (2019).
\bibitem{Agrawal2001} Ashvin Agrawal, Rony Chatterjee, Carlo Curino et al.: Cloudy with high chance of DBMS: a 10-year prediction for Enterprise-Grade ML. CIDR 2020
\bibitem{Agrawal2002}Konstantinos Karanasos, Matteo Interlandi, Fotis Psallidas et al.: Extending Relational Query Processing with ML Inference. CIDR 2020
\bibitem{Ivan2012} Ivan Damgard, Valerio Pastro, Nigel P. Smart, Sarah Zakarias:
Multiparty Computation from Somewhat Homomorphic Encryption. CRYPTO 2012: 643-662.
\bibitem{Ivan2013}Ivan Damgard, Marcel Keller, Enrique Larraia, Valerio Pastro, Peter Scholl, Nigel P. Smart: Practical Covertly Secure MPC for Dishonest Majority - Or: Breaking the SPDZ Limits. ESORICS 2013: 1-18;
\bibitem{Ivan2018}Ronald Cramer, Ivan Damgard, Daniel Escudero, Peter Scholl, Chaoping Xing: SPD$_2^k$: Efficient MPC mod~$2^k$ for Dishonest Majority. IACR Cryptology ePrint Archive 2018: 482 (2018)
\bibitem{Smart1901} Nigel P. Smart, Titouan Tanguy: TaaS: Commodity MPC via Triples-as-a-Service. CCSW@CCS 2019: 105-116
\bibitem{Smart2001} Emmanuela Orsini, Nigel P. Smart, Frederik Vercauteren:
Overdrive2k: Efficient Secure MPC over $\mathbb {Z}_{2^k}$ from Somewhat Homomorphic Encryption. CT-RSA 2020: 254-283
\bibitem{Du2001} Wenliang Du, Mikhail J. Atallah: Secure multi-party computation problems and their applications: a review and open problems. NSPW 2001: 13-22
\bibitem{Dan2012} Dan Bogdanov, Margus Niitsoo, Tomas Toft, Jan Willemson:
High-performance secure multi-party computation for data mining applications. Int. J. Inf. Sec. 11(6): 403-418 (2012)
\bibitem{Lindell201601}Toshinori Araki, Jun Furukawa, Yehuda Lindell, Ariel Nof, Kazuma Ohara: High-Throughput Semi-Honest Secure Three-Party Computation with an Honest Majority. ACM Conference on Computer and Communications Security 2016: 805-817
\bibitem{Micali18}Jing Chen, Sergey Gorbunov, Silvio Micali, Georgios Vlachos:
ALGORAND AGREEMENT: Super Fast and Partition Resilient Byzantine Agreement. IACR Cryptology ePrint Archive 2018: 377 (2018)
\bibitem{Micali19} Jing Chen, Silvio Micali:
Algorand: A secure and efficient distributed ledger. Theor. Comput. Sci. 777: 155-183 (2019)
\bibitem{Mahan1702} Keith Bonawitz, Vladimir Ivanov, Ben Kreuter et.al:
Practical Secure Aggregation for Privacy-Preserving Machine Learning. ACM Conference on Computer and Communications Security 2017: 1175-1191
\bibitem{Goldreichbook1} Oded Goldreich: The Foundations of Cryptography - Volume 1: Basic Techniques. Cambridge University Press 2001, ISBN 0-521-79172-3
\bibitem{Goldreichbook2} Oded Goldreich: The Foundations of Cryptography - Volume 2: Basic Applications. Cambridge University Press 2004, ISBN 0-521-83084-2
\end{thebibliography}
\end{document}